\documentclass{article}
\usepackage[utf8]{inputenc}
\usepackage{authblk}
\pdfoutput=1
\usepackage{multicol}
\usepackage[colorlinks]{hyperref}
\usepackage{amsmath}
\usepackage{amsthm}
\usepackage{enumerate}
\usepackage{tikz}
\usepackage{graphicx}
\usepackage{subfig}
\usepackage[width=1.\textwidth]{caption}

\oddsidemargin  0.0in \evensidemargin 0.0in 
\textwidth  7.0in \textheight  9.0in  \topmargin   0.0in \headsep 0.35cm \voffset
0.0in

\usetikzlibrary{shapes.geometric}
\newtheorem{theorem}{Theorem}

\newcommand{\pSize}{3.5cm} 
\newenvironment{proof*}[1]{\noindent{\bf #1}}{\qed}

\title{How Quantum Information can improve Social Welfare}

\author[1]{Berry Groisman}
\author[2]{Michael Mc Gettrick}
\author[3]{Mehdi Mhalla}
\author[4]{Marcin Pawlowski}
\affil[1]{Department of Applied Mathematics and Theoretical Physics, University of Cambridge, UK}
\affil[2]{School of Mathematics, Statistics and Applied Mathematics, National University of Ireland, Galway, Ireland}
\affil[3]{Universit\'e Grenoble Alpes, Grenoble, France}
\affil[4]{International Centre for Theory of Quantum Technologies,
University of Gdansk, Wita Stwosza 63, 80-308 Gdansk, Poland}


\usepackage{graphicx}

\newcommand{\ket}[1]{|#1\rangle}

\begin{document}

\maketitle
\abstract{In \cite{belief1,pappa,belief3} it has been shown that quantum resources can allow us to achieve 
a family of equilibria that can have sometimes a better 
social welfare, while guaranteeing privacy.
We use graph games to propose a way to build non-cooperative games
from graph states, and we show how to achieve an unlimited improvement
with quantum advice compared to classical advice.}

\section{Introduction}
An important tool in analysing games is the concept of 
{\it Nash equilibrium} \cite{nash1950equilibrium}, which represents situations
where no player has incentive 
to deviate from their strategy. This corresponds to situations observed in real life,
with applications in economics, sociology, international relations, biology, etc.
All equilibria do not have the same  {\it social
welfare}, i.e. the average payoff is different from one equilibrium to another.
Games of incomplete information can exhibit better equilibria if players use a resource – a general correlation, $Q$. Such correlation 
can be viewed as a resource produced by a mediator to
give {\it advice} to the players. The concept of advice generalizes the notion of Nash equilibrium to a broader class of equilibria \cite{aumann}. All such equilibria can be classified according to the properties of the resource correlation. Three classes can be identified in addition to Nash equilibria (no correlation),  
namely general communication equilibria (Comm) \cite{forges1982}, where $Q$ is unrestricted, belief-invariant equilibria (BI) \cite{forges1993,forges2006,lehrer2010,liu2015} and correlated equilibria (Corr) \cite{aumann}. The canonical versions of these equilibria form a sequence of nested sets within the set of canonical correlations: 
\begin{equation*}
\text{Nash}\subset\text{Corr}\subset\text{BI}\subset\text{Comm}. 
\end{equation*}
It was demonstrated that there exist games where BI equilibria can outperform $\text{Corr}$ equilibria \cite{pappa} (in terms of a social welfare (SW) of a game) as well 
as games where BI equilibria outperform any non-BI equilibria. 

Winter at al. \cite{belief1} introduce quantum correlated equilibria as a 
subclass of BI equilibria and show 
that quantum correlations can achieve optimal SW. This provides the link with quantum nonlocality, where quantum resources are used to produce {\it non-signalling} correlations. 
In this context, belief invariance describes the largest class of correlations that obey {\it relativistic causality}.

A characteristic feature of belief-invariance is that it ensures privacy -- the other players involved in the game have 
no infomation about the input one player sent to the resource.

To obtain the canonical form of the games, \cite{mathieu2018separating} show that one can suppose that the output of the correlation resource is the answer 
the players give by delegating the extra computation (from game question to input
to the box and from output of the box to players' answer) to the mediator. 
Therefore, 
quantum equilibria can be reached in a setting where players each measure quantum 
systems or, equivalently, by just having a central 
system providing advices by measuring a quantum device.

Ref. \cite{belief1} highlights several open questions. In particular, 
\begin{enumerate}[(1)]
 \item Whether any full-coordination game (a.k.a. a {\it non-local game} in
 quantum physics and computer science communities) can be converted into a 
 conflict-of-interests
 game. Ref. \cite{pappa} gives an example of a two-player variant of the CHSH game,
 while \cite{belief1} extends their result to an $n$-player game in which there
 exists
 a BI equilibrium which is better than any $\text{Corr}$ equilibrium. 
\item How can we get a large separation between the expected payoff for the
quantum and correlated 
equilibrium cases, and what is the upper bound for the separation? In the case of 
two-player full coordination games this question was settled in
\cite{buhrman2011, junge2010}.
Are there conflict-of-interest games which exhibit large separation?
 \end{enumerate}

In this paper, we provide a natural way to convert graph games 
(and more generally stabiliser games) into conflict-of-interest games, 
and we show how we can create 
unbounded separation by increasing the number of players or using penalty techniques (a negative payoff). 


An interesting feature in these games compared to the usual
pseudo-telepathy scenarios studied in quantum information is the notion of {\it involvement} \cite{MCTX, mathieu2018separating},
which allows one to define some interesting 
scenarios in non-cooperative games and which exhibits novel features,
e.g. unlimited separation. If a player participates in the game but 
is not involved (on a particular 
round) it means that their strategy is not taken into 
account when determining the win/lose outcome. 
However, they do receive a corresponding payoff. 

Using these games one can build  games with bounded personal utilities
$v_0$, $v_1$ on $O(log(\frac 1 \epsilon))$ players
 ensuring $\frac{CSW(G)}{QSW(G}\le \epsilon$, where CSW/QSW are
 the Classical/Quantum Social Welfares, respectively.





The paper is organized as follows. In Sec. \ref{sec:graph games} we describe graph games which are the underlying non-local games used to define our games. 
In Sec. \ref{sec:non-coll games} we define a non-collaborative game as a modification of the collaborative games by introducing unequal payoffs 
corresponding to answers 0 and 1 of each player, 
and discuss the corresponding quantum perfect strategy. We consider a particular version of graph games from the cycle on five vertices. Sec. \ref{sec:variant} discusses
variations of non-collaborative games based on the cycle on five vertices. Finally, Sec. \ref{sec:amp} shows how one  can amplify the quantum advantage  by  adding a penalty for wrong answers and  by increasing the number of players.


\section{Graph games}\label{sec:graph games}
Non-local games play a key role in Quantum Information theory. They can be viewed 
as a setting in which players that are not allowed to communicate receive some inputs and have to produce some outputs, and there is a winning/losing condition depending globally on their outputs for each input. 
Particular types of games are  pseudo-telepathy games \cite{pseudo}  
which are games that can be won perfectly using quantum resources  but  that are 
impossible to win perfectly without communication  when the players have
access only to shared randomness.  
Multipartite collaborative games ($MCG(G)$) are a
family of pseudo-telepathy games based on certain types of quantum states 
called {\it graph states}.
%
%
The players are identified with vertices of the graph and have
a binary input/output each with the winning/losing conditions
built using the stabilisers of the graph states.

The combinatorial game\footnote{without considering probability distributions} 
$MCG$  with $n$ players  consists in asking the players questions:  for each question $q$, 
each player $i$ receives  one bit  $q_i$ as input and answers one  bit  $a_i$. 
They can either all win or all lose depending on their answer, 
with winning/losing conditions described by a set  $\{(q,I(q),b(q))\}$ where
\begin{itemize}
\item  $q\in \{0,1\}^n$ is a valid question in which each player $i$ 
gets the bit $q_i$  and in the subgraph of the vertices corresponding to players
receiving one, all vertices  have even degree.  Let $I_1=\{i, q_i=1\}$ 
and $G'=G_{|I_1}$, a question is valid if each vertex of $ G'$ has an even 
number of neighbors in $G'$ 
\item $I(q)\subset [n]$ is a subset of players that are called {\it `involved'}
in the question as the sum (modulo $2$) of their answers determines
the winning/losing condition according to the bit  $b(q)$:
\item $b(q)$ is defined such that the players win the game when the question 
is $q$ if the sum of the answers of the involved players is equal
to the parity of the number of edges of the subgraph of the vertices
corresponding to players receiving one:  $\sum_{i\in I(q)} a_i=b(q)=|E(G')| \bmod 2$.
\end{itemize}


 For instance the game associated to the cycle on 5 elements $MCG(C_5)$
 is defined by
 
 \begin{itemize}
 \item  When the question is  $q=11111$ (each player has input 1),
 the players lose if the binary sum of their answer is 0, {\it{i.e.}} $\sum_{i=0}^{4} a_i=0 \bmod 2$
 , and win otherwise.
 \item When the question contains $010$ for three players corresponding to three adjacent vertices, the players lose if the binary sum 
 of the answer of these three players is 1 {\it{i.e.}} $a_{i-1}+a_i+a_{i+1}=0 \bmod 2$ when $q$ contains $0_{i-1}1_i0_{i+1}$.
\item The players win otherwise.
\end{itemize}

A variation of this game can be done by reducing the set of valid questions, 
for instance in the above set-up the questions of the second type have only 
three players ``involved", so a first version could be to chose only 5 questions
of the second type and give always 0 as advice to the non-involved players. 
This is the game studied as an example in \cite{mathieu2018separating}.

 An important point is that the notion of involvement in $MCG$ games is absent 
 in unique games and introduces situations  where the players might change their 
 strategy (answer) without changing the winning/losing status
 of the global strategy.


To analyse these games and the strategies, one can imagine a scenario 
where there is one special player representing Nature who is playing against
the other 
players. The strategy of Nature is therefore a probability distribution  
over the questions that we study here (as is standard in game theory)
as a known function on the set of questions $w:T\rightarrow [0,1]$ such 
that $\sum_{t\in T} w(t)=1$.
The games will be therefore defined by equipping the combinatorial 
game with a probability distribution over the questions.

\section{Defining non-collaborative games}\label{sec:non-coll games}

Like in multipartite collaborative graph games $MCG(G)$, we associate a
non-collaborative game $NC(G)$ to each graph. We differentiate the payoff
of the 
players using the value of their output: If the global answer wins in the 
non-local game, each player gets $v_1$ if they answer 1 and $v_0$ if they
answer 0. If the global answer loses, they get 0.

To match the traditional terminology used in game theory the output from now on will be called {\it strategy}, and the input called {\it type}.  The payoff is called {\it 
utility} and the social welfare is the average of the utilities over the players.

 A non-collaborative game $NC(G)$ is thus defined from $MCG(G)$  as follows
\begin{itemize}
\item  The  considered types  are $T\subset \{0,1\}^n$ where $n$ is the number of vertices of $G$.  
\item As in $MCG$, to each type $t\in T$ corresponds an associated 
involved set $I(t)$  of players, and an expected binary answer $b(t)$.
\item As in $MCG$, the losing set is
$${\cal L}=\{(s,t), \sum_{i\in I(t)} s_i \neq b(t) \bmod 2\}.$$
We say that the players  using a strategy $s$, 
given a type $t$, collectively win the game  when the sum of 
the local strategies of the involved players is equal  to the requested
binary answer modulo 2.
\item the payoff function is: 
$$u_j(s|t)=\left\{
\begin{array}{ll}
 v_{s_j} & \mathrm{ \, \, if \, \,}  (s,t)\not \in {\cal L} \\
  0 &\mathrm{\, \,  Otherwise}
  \end{array} \right.
$$
 
\end{itemize}

Firstly we consider the
 cycle on five vertices $C_5$. We define $NC_{00}(C_5)$ 
 based on the 
 non-local
game $MCG(C_5)$  studied in \cite{MCTX,mathieu2018separating}
For questions 
 which  
involve three players, both non-involved players have type $0$ 
(see Figure \ref{fig1}).

\begin{figure}[ht]
\centering
\begin{tabular}{ccc}
\includegraphics{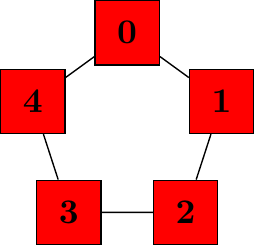}
&
\includegraphics{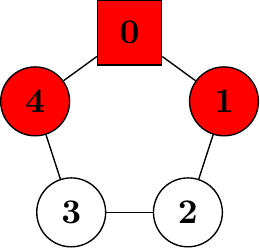}
&
\includegraphics{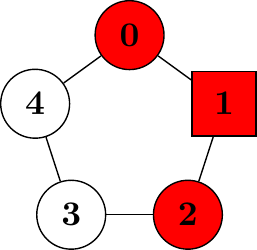}
\\
\parbox{\pSize}{\centering
(a) $T_a=11111,$\\
$ I = \{0,1,2,3,4\}, b=1$
}
&
\parbox{\pSize}{\centering
(b) $T_0=10000,$\\
$ I = \{4,0,1\}, b=0$
}
&
\parbox{\pSize}{\centering
(c) $T_1=01000,$\\
$I = \{0,1,2\}, b=0$
}
\\
\\
\includegraphics{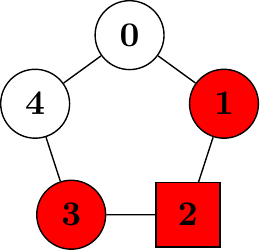}
&
\includegraphics{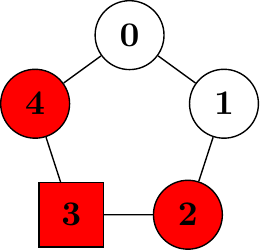}
&
\includegraphics{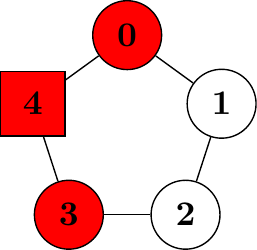}
\\
\parbox{\pSize}{\centering
(d) $T_2=00100,$\\
$I = \{1,2,3\}, b=0$
}
&
\parbox{\pSize}{\centering
(e) $T_3=00010,$\\
$I = \{2,3,4\}, b=0$
}
&
\parbox{\pSize}{\centering
(f) $T_4=00001,$\\
$I = \{3,4,0\}, b=0$
}
\\
\end{tabular}
\caption{$NC_{00}(C_5)$: Square nodes indicate
a 1 in the associated type, while circular nodes indicate a 0. \emph{Involved} players in each case are shaded in {\bf \color{red} red}.\label{fig1}}
\end{figure}

\begin{table}[!ht]
\begin{center}
\begin{tabular}{ccc} \hline\noalign{\smallskip}
Type&Involved set& Binary answer {\smallskip} \\
\hline\hline\noalign{\smallskip}
$T_a=11111$ & $I(T_a)=\{0,1 ,2,3,4\}$ & $b(T_0)=1$\\
\hline\noalign{\smallskip}
$T_0=10000$ & $I(T_0)=\{0,1 ,4\}$ & $b(T_0)=0$\\
\hline\noalign{\smallskip}
$T_1=01000$ & $I(T_1)=\{0,1 ,2\}$ & $b(T_1)=0$ \\
\hline\noalign{\smallskip}
 $T_2=00100$ & $I(T_2)=\{1 ,2,3\}$ & $b(T_2)=0$ \\
\hline\noalign{\smallskip}
 $T_3=00010$ & $I(T_3)=\{2,3 ,4\}$ & $b(T_3)=0$ \\
 \hline\noalign{\smallskip}
 $T_4=00001$ & $I(T_4)=\{3,4 ,0\}$ & $b(T_4)=0$ \\
\hline
\end{tabular}
\end{center}
\caption{$NC_{00}(G)$ game.} 
\label{T1}
\end{table}

We consider the game with the type probability distribution  $w(t)=1/6$ for all the types.

The quantum perfect strategy for $NC(G)$ is obtained when the players
each have  a qubit from graph state $\ket{G}$ \cite{MCTX}.
Each player $i$ measures their qubit according to their type $t_i$, 
getting a quantum advice representing their part of the quantum 
strategy $s_i$.\cite{MCTX}
From the study of $MCG(G)$ we have 
\begin{theorem}\label{thm1}
If all the players collaborate (follow the quantum advice) then
for any probability distribution over the types, the utility of each player
is $(v_0+v_1)/2$.
\end{theorem}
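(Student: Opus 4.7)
The plan is to decompose the expected utility into two factors and show that each one has a type-independent value: first, that the collaborative quantum strategy wins the underlying non-local game with probability one, so the payoff is always $v_{s_j}$ rather than $0$; and second, that each player's individual measurement outcome $s_j$ is uniform on $\{0,1\}$ regardless of the type, so $\mathbb{E}[v_{s_j}] = (v_0+v_1)/2$.

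For the first step, I would appeal directly to the perfect quantum strategy for $MCG(G)$ recalled from \cite{MCTX}: measuring the graph state $\ket{G}$ with $X$ on every qubit whose type is $1$ and $Z$ on every qubit whose type is $0$ produces outputs whose involved-set parity always matches $b(t)$. Since $NC(G)$ uses exactly the same losing set $\mathcal{L}$ as $MCG(G)$, the indicator $\mathbb{1}[(s,t)\notin \mathcal{L}]$ equals $1$ almost surely, so for every type $t$ and every player $j$ we have $u_j(s|t) = v_{s_j}$.

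For the second step, I would use the stabiliser formalism for graph states. The reduced state on a single qubit $j$ equals $\tfrac{1}{2^{|A|}}\sum_{g\in S_A}g$ with $A=\{j\}$, but the only stabiliser of $\ket{G}$ supported on a single vertex would have to be a tensor product of Paulis with identity outside $\{j\}$, and no such element exists in the stabiliser group as soon as $j$ has at least one neighbour (which holds for every vertex of the graphs considered, e.g.\ $C_5$). Hence the single-qubit marginal on vertex $j$ is $I/2$, so measuring in either the $X$ or the $Z$ basis yields $s_j=0$ and $s_j=1$ with probability exactly $1/2$ each, independently of $t_j$ and of the types of the other players. Combining the two steps gives
\begin{equation*}
\mathbb{E}[u_j(s|t)] \;=\; \mathbb{E}[v_{s_j}] \;=\; \tfrac{1}{2}v_0+\tfrac{1}{2}v_1
\end{equation*}
for every fixed type $t$, and averaging over an arbitrary type distribution $w$ preserves this value.

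The main obstacle is the second step: one needs the fact that the single-qubit marginals of $\ket{G}$ are maximally mixed in the relevant measurement basis, which is exactly where the hypothesis that the game is built from a graph state (rather than an arbitrary quantum resource) is used. Everything else is bookkeeping: since the payoff is conditionally constant modulo the sign of $v_{s_j}$ and the sign is controlled by a fair coin, the expectation is the arithmetic mean of $v_0$ and $v_1$, independent of both $t$ and $w$.
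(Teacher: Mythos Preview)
Your proposal is correct and follows the same line as the paper's proof, only far more explicitly: the paper's entire argument is the single sentence ``The output of each quantum measurement provides uniformly all the possible answers,'' which is exactly your step~2 (uniform marginal $s_j$), with step~1 (perfect winning) left implicit from the $MCG(G)$ construction. Your stabiliser-formalism justification that the single-qubit reduced state of $\ket{G}$ is $I/2$ whenever the vertex has a neighbour is a welcome expansion of what the paper merely asserts.
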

\begin{proof}
The output of each quantum measurement provides uniformly all the possible answers.
\end{proof}

\subsection{Is the quantum pseudo-telepathy solution a Nash equilibrium?}\label{0}

As the players now have an incentive to answer $1$, they can sacrifice 
 always getting a good answer to maximize their utility.
Indeed, in the previous game, each player is always involved when they
get type $1$ and with probability $1/2$ when they get type $0$; 
getting the wrong answer in that case only costs $v_0$.

Without loss of generality we consider $v_1\ge v_0$. The players now have an 
incentive to answer $1$, because they might be able to maximize their
utility
by allowing the non-zero probability of a wrong answer. Indeed, in the 
previous game, $NC_{00}(C_5)$, if the player gets type $1$ then they are 
certain that they 
are involved, and they won't gain by defecting (not following advice). However,
if their type is $0$, then the probability of them being involved is $1/2$, 
and so there is a 
fifty percent chance that they will benefit from always answering $1$ while 
not 
compromizing the winning combination. Getting the wrong answer in that case only 
costs $v_0$.

\begin{theorem}\label{thm2}
Let $p_{\text{inv}}^{(i)}(t_i,s_i)$ be the probability for the player $i$ who gets
type $t_i$ and advice $s_i$ to be involved

Then, in $NC(G)$, the quantum advice gives a belief-invariant Nash equilibrium iff
 $$\frac{v_0} {v_1}\ge (1-p), $$ 
where  \[p=\min_i\min_{t_i}p_{\text{inv}}^{(i)}(t_i,0).\] 
\end{theorem}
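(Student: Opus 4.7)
The plan is to reduce the Nash-equilibrium test to a one-shot deviation analysis. Two structural facts drive the argument. First, by Theorem~\ref{thm1}, the quantum graph-state strategy is a \emph{perfect} strategy: when every player follows the advice, the winning condition holds with certainty, so player $i$ upon receiving $(t_i, s_i)$ collects $v_{s_i}$ deterministically. Second, belief invariance of the quantum correlation means that, from player $i$'s viewpoint, conditioning on their advice $s_i$ reveals no additional information about the other players' types beyond what $t_i$ already provides; consequently the conditional probability of the event ``$i \in I(t)$'' given $(t_i, s_i)$ is well defined and is exactly $p_{\text{inv}}^{(i)}(t_i, s_i)$.

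I would then classify the admissible deviations. Since player $i$ only observes $(t_i, s_i)$, a deviation is a function $f(t_i, s_i) \in \{0,1\}$, and private randomization is dominated by a pure best response because expected utility is linear. For each of the four $(s_i, a_i)$ cases it suffices to compare the honest payoff $v_{s_i}$ with the flip payoff. The combinatorial lever is that the winning parity $\sum_{j \in I(t)} a_j = b(t) \bmod 2$ is an exclusive-or: flipping $a_i$ turns a win into a loss iff $i \in I(t)$, while if $i \notin I(t)$ the flip leaves the group outcome intact and merely swaps player $i$'s own payoff to $v_{1-s_i}$. Hence, conditional on $(t_i, s_i)$, the expected payoff from flipping is
\[
\bigl(1 - p_{\text{inv}}^{(i)}(t_i, s_i)\bigr)\, v_{1-s_i},
\]
and honesty is a best response at $(t_i, s_i)$ iff $v_{s_i} \ge \bigl(1 - p_{\text{inv}}^{(i)}(t_i, s_i)\bigr)\, v_{1-s_i}$.

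Dispatching the two sub-cases: for $s_i = 1$, the inequality $v_1 \ge (1-p_{\text{inv}}^{(i)}(t_i,1))\, v_0$ is automatic from $v_1 \ge v_0$, so no player is ever tempted to flip a ``1''-advice. For $s_i = 0$, the condition becomes $v_0/v_1 \ge 1 - p_{\text{inv}}^{(i)}(t_i, 0)$, and requiring this for every player and every type on which advice $0$ arises collapses to $v_0/v_1 \ge 1 - p$ with $p = \min_{i,t_i} p_{\text{inv}}^{(i)}(t_i, 0)$, i.e.\ the stated inequality. The converse direction is immediate: if the inequality fails at some $(i, t_i)$, the flip to $1$ strictly improves expected utility there, breaking equilibrium. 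The step I expect to demand the most care is justifying that the deviation payoff is correctly captured by $p_{\text{inv}}^{(i)}(t_i, s_i)$, namely showing that belief invariance together with the perfectness from Theorem~\ref{thm1} eliminates any hidden correlation between $s_i$ and the other players' answers that could bias the conditional win-probability, so that the two-line expectation above really is the conditional expected utility and not a coarser bound.
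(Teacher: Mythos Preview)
Your proposal is correct and follows essentially the same approach as the paper: a one-shot deviation analysis that dismisses the $s_i=1$ case immediately (since $v_1\ge v_0$) and, for $s_i=0$, compares the honest payoff $v_0$ with the flip payoff $(1-p_{\text{inv}}^{(i)}(t_i,0))\,v_1$, yielding the inequality $v_0/v_1\ge 1-p_{\text{inv}}^{(i)}(t_i,0)$ for every $i,t_i$. Your write-up is actually more careful than the paper's---you make explicit the role of belief invariance in justifying that $p_{\text{inv}}^{(i)}(t_i,s_i)$ is the correct conditional probability and you spell out the converse---but the underlying computation and logic are identical.
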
 

\begin{proof}
If the advice is $s_i=1$ then the winning payoff is already $v_1$. Consider the case when player $i$ is given the advice $s_i=0$ (which would lead to payoff $v_0$ 
in the winning case).
If the player defects then the difference of utility is 
$-v_0  p_{\text{inv}}^{(i)}(t_i,0) + (1-p_{\text{inv}}^{(i)}(t_i,0)) (v_1-v_0)$.
So the strategy is a Nash-equilibrium when  
$ (1-p_{\text{inv}}^{(i)}(t_i,0)) v_1\le v_0   $, i.e  
$ v_0/v_1 \ge   {1-p_{\text{inv}}^{(i)}(t_i,0)} $. 
This inequality has to hold for all types and all players.

\end{proof}


 For $NC_{00}(C_5)$, 
$p_{\text{inv}}^{(i)}(0,0)=1/2$ and therefore 
the quantum nonlocal strategy is an equilibrium
only when 
$v_0/v_1 \ge1/2$.


One important characteristic of an equilibrium is the {\it Social Welfare},
which is the average utility of the players.

As a direct consequence of Theorem \ref{thm1} the average social welfare of 
the quantum strategy is independent on the graph $$QSW(NC(G))=\frac{v_0+v_1} {2}.$$

Note that the non collaborative games defined have a special feature that we call {\it guaranteed value}: in any run of the game players following the quantum strategy  receive their expected payoff with probability 1.

\section{Some versions of \texorpdfstring{$NC(C_5)$}{Lg}}\label{sec:variant}

In this section we study the game $NC_{00}(C_5)$ and then introduce a number of modifications
in order to improve the quantum advantage 
(ratio of quantum social welfare to correlated social welfare) and also to symmetrize the game such that the players get $0$ and $1$ with same probability or
have the same probability of being involved  regardless of whether their  type is 0 or 1.


\subsection{Study of \texorpdfstring{$NC_{00}(C_5)$}{Lg}}

Pure Nash equilibria  can be described by local functions: 
each player having one local type bit and one strategy bit to produce,
can locally act as follows:

\begin{itemize}
\item $ 0\rightarrow 0$ , $ 1\rightarrow 0$  constant function 0 denoted {\bf 0}
\item $ 0\rightarrow 1$ , $ 1\rightarrow 1$  constant function 1 denoted  {\bf 1}
\item $ 0\rightarrow 0$ , $ 1\rightarrow 1$  Identity  function  denoted  {\bf 2}
\item $ 0\rightarrow 1$ , $ 1\rightarrow 0$  NOT   function  denoted  {\bf 3}
\end{itemize}


The set of pure Nash  equilibria depends on  the ratio $v_0/v_1$.  
The are  20/25/40 pure Nash equilibria (4/4/6 up to symmetry) when $v_0/v_1$ lies within the interval
$[0,1/3]$, $[1/3,1/2]$ or $[1/2,1]$ respectively (see Table \ref{T:C00}). 




\begin{center}
 
 $\begin{array}{lllll|lllll|l} 
 \multicolumn{5}{c}{\text{Local functions}} 
 &  \multicolumn{5}{c}{\text{Players utility  $[\times 6]$ }}&  SW [\times30]\\ \hline\hline 
\noalign{\smallskip}
\multicolumn{11}{c}{v_0/v_1\le 1/3}{\smallskip}\\ \hline \\
{ \bf
2}
&
{ \bf
1}
&
{ \bf
1}
&
{ \bf
1}
&
{ \bf
1
}
&
2
v_0
\, + \, 
v_1
&
3
v_1
&
3
v_1
&
3
v_1
&
3
v_1
&
2v_0 +13v_1
\\

{ \bf
3
}
&
{ \bf
3}
&
{ \bf
1}
&
{ \bf
1}
&
{ \bf
1}
&
2
v_0
\, + \, 
v_1
&
2
v_0
\, + \, 
v_1
&
3
v_1
&
3
v_1
&
3
v_1
&
v_0 +11v_1
\\
{ \bf
3}
&
{ \bf
1}
&
{ \bf
3}
&
{ \bf
1}
&
{ \bf
1}
&
2
v_0
\, + \, 
v_1
&
3
v_1
&
2
v_0
\, + \, 
v_1
&
3
v_1
&
3
v_1
&
{4v_0 +11v_1} 
\\
{ \bf
3}
&
{ \bf
3}
&
{ \bf
3}
&
{ \bf
3}
&
{ \bf
1}
&
2
v_0
\, + \, 
3
v_1
&
2
v_0
\, + \, 
3
v_1
&
2
v_0
\, + \, 
3
v_1
&
2
v_0
\, + \, 
3
v_1
&
5
v_1
&
8v_0 +17v_1
\\

\hline 
\noalign{\smallskip}
\multicolumn{11}{c}{1/3 \le v_0/v_1\le 1/2}{\smallskip}\\ \hline \\
{ \bf
1}
&
{ \bf
3}
&
{ \bf
1}
&
{ \bf
1}
&
{ \bf
0}
&
5
v_1
&
2
v_0
\, + \, 
3
v_1
&
5
v_1
&
5
v_1
&
5
v_0
&
7v_0 +18v_1
\\
{ \bf
2}
&
{ \bf
2}
&
{ \bf
1}
&
{ \bf
1}
&
{ \bf
1}
&
3
v_0
\, + \, 
2
v_1
&
3
v_0
\, + \, 
2
v_1
&
5
v_1
&
5
v_1
&
5
v_1
&
6v_0 +19v_1
\\
{ \bf
3}
&
{ \bf
3}
&
{ \bf
1}
&
{ \bf
1}
&
{ \bf
1}
&
2
v_0
\, + \, 
v_1
&
2
v_0
\, + \, 
v_1
&
3
v_1
&
3
v_1
&
3
v_1
&
4v_0 +11v_1
\\
{ \bf
3}
&
{ \bf
3}
&
{ \bf
3}
&
{ \bf
3}
&
{ \bf
1}
&
2
v_0
\, + \, 
3
v_1
&
2
v_0
\, + \, 
3
v_1
&
2
v_0
\, + \, 
3
v_1
&
2
v_0
\, + \, 
3
v_1
&
5
v_1
&
8v_0 +17v_1
\\



\hline 
\noalign{\smallskip}
\multicolumn{11}{c}{v_0/v_1\ge 1/2}{\smallskip}\\ \hline \\
{ \bf
3}
&
{ \bf
2}
&
{ \bf
1}
&
{ \bf
1}
&
{ \bf
0}
&
2
v_0
\, + \, 
3
v_1
&
4
v_0
\, + \, 
v_1
&
5
v_1
&
5
v_1
&
5
v_0
&
11v_0 +14v_1
\\
{ \bf
1}
&
{ \bf
3}
&
{ \bf
1}
&
{ \bf
1}
&
{ \bf
0}
&
5
v_1
&
2
v_0
\, + \, 
3
v_1
&
5
v_1
&
5
v_1
&
5
v_0
&
7v_0 +18v_1
\\
{ \bf
2}
&
{ \bf
2}
&
{ \bf
1}
&
{ \bf
1}
&
{ \bf
1}
&
3
v_0
\, + \, 
2
v_1
&
3
v_0
\, + \, 
2
v_1
&
5
v_1
&
5
v_1
&
5
v_1
&
6v_0 +19v_1
\\
{ \bf
3}
&
{ \bf
3}
&
{ \bf
1}
&
{ \bf
2}
&
{ \bf
1}
&
2
v_0
\, + \, 
3
v_1
&
2
v_0
\, + \, 
3
v_1
&
5
v_1
&
4
v_0
\, + \, 
v_1
&
5
v_1
&
8v_0 +17v_1
\\
{ \bf
3}
&
{ \bf
3}
&
{ \bf
3}
&
{ \bf
3}
&
{ \bf
1}
&
2
v_0
\, + \, 
3
v_1
&
2
v_0
\, + \, 
3
v_1
&
2
v_0
\, + \, 
3
v_1
&
2
v_0
\, + \, 
3
v_1
&
5
v_1
&
8v_0 +17v_1
\\
{ \bf
3}
&
{ \bf
2}
&
{ \bf
3}
&
{ \bf
2}
&
{ \bf
2}
&
2
v_0
\, + \, 
3
v_1
&
4
v_0
\, + \, 
v_1
&
2
v_0
\, + \, 
3
v_1
&
3
v_0
\, + \, 
2
v_1
&
3
v_0
\, + \, 
2
v_1
&
14v_0 +11v_1
\\
\hline 
\end{array}
$
\captionof{table}{Nash equilibria for three intervals of the value $v_0/v_1$. Note that the critical values 1/2 and 1/3 have union of both tables as equilibria.}
\label{T:C00}

\end{center}

We can see that most of these equilibria  (all of them when $v_0/v_1\ge 1/2$) 
correspond to local functions winning for the 5  types. 

When $v_0=2/3$ and $v_1=1$ then  the quantum social welfare of the
pseudotelepathy strategy is $QSW=0.83$  whereas the best classical social 
welfare  $CSW=0.77$.

As noted in section \ref{0} the probability of being involved in $NC_{00}$ is $(p(1,s)=1$ and $p(0,s)=1/2$ and the quantum pseudotelepathy measurements 
strategy is an equilibrium if $ v_0/v_1\ge 1/2$.

Simililar behavior can be seen with Pareto equilibria (ones in which local utility cannot improve without reducing the outcome of someone else): see Appendix.

Recall that the characteristic feature of $NC_{00}(C_5)$ is that each player has unequal probabilities of getting different types. The game can be symmetrized 
by changing the types of the non-involved players from $00$ to $01$, as shown in the next section.
\subsection{Comments on \texorpdfstring{$NC_{01}(C_5)$}{Lg}}
We define  a second variant from $MCG(C_5)$ : $NC_{01}(C_5)$ where any player
gets the types 0 and 1 with probability 1/2 by adding an 
extra 1 for a non-involved player in the types so that 
$T_i=0_{i-1}1_i0_{i+1}1_{i+2}0_{i+3}$: see Table \ref{T:C01}.



If the type probability distribution is $w(t)=1/6$ for all the types, then
one can see that any player is involved 
with probability 2/3 whether their input is 0 or 1, i.e. 
$p_{\text{inv}}^{(i)}(0,0)=p_{\text{inv}}^{(i)}(1,0)=2/3$. 
Hence, by Theorem \ref{thm2}, the quantum strategy of MCG  produces a 
Nash equilibrium  iff $v_0/v_1 \ge 1/3$. 
Thus, one of the benefits of this variant is that quantum Nash equilibria exist at a lower ratio $v_0/v_1$.


Note that in this version each player is getting a perfect random bit as 
advice : $p(a=1)=p(a=0)=1/2$. 


When $v_0=2/3$ and $v_1=1$ then  the quantum social welfare of the
pseudotelepathy strategy is $QSW=0.83$  whereas 
the best classical social welfare  is $CSW=0.78$.

 

\begin{table}[!ht]
\begin{center}
\begin{tabular}{ccc} \hline\noalign{\smallskip}
Type&Involved set& Binary answer {\smallskip} \\
\hline\hline\noalign{\smallskip}
$T_a=11111$ & $I(T_a)=\{0,1 ,2,3,4\}$ & $b(T_a)=1$\\
\hline\noalign{\smallskip}
$T_0=10100$ & $I(T_0)=\{0,1 ,4\}$ & $b(T_0)=0$\\
\hline\noalign{\smallskip}
$T_1=01010$ & $I(T_1)=\{0,1 ,2\}$ & $b(T_1)=0$ \\
\hline\noalign{\smallskip}
 $T_2=00101$ & $I(T_2)=\{1 ,2,3\}$ & $b(T_2)=0$ \\
\hline\noalign{\smallskip}
 $T_3=10010$ & $I(T_3)=\{2,3 ,4\}$ & $b(T_3)=0$ \\
 \hline\noalign{\smallskip}
 $T_4=01001$ & $I(T_4)=\{3,4 ,0\}$ & $b(T_4)=0$ \\
\hline
\end{tabular}
\end{center}
\caption{$NC_{01}(G)$ game (Here the players are identified with the integers modulo 5).}
\label{T:C01}
\end{table}

\subsection{Comments on \texorpdfstring{$NC_{00,0}(C_5)$}{Lg}}

A modification of a different kind consists in adding more 
questions from the stabiliser. As the first example of this kind we
define a game $NC_{00,0}(C_5)$, where the additional family of questions has 
four involved players with the non-involved player getting type
$0$,  as specified by Table \ref{T:C000}.

\begin{table}[!ht]
\begin{center}
\begin{tabular}{ccc} \hline\noalign{\smallskip}
Type&Involved set& Binary answer {\smallskip} \\
\hline\hline\noalign{\smallskip}
$T_a=11111$ & $I(T_a)=\{0,1 ,2,3,4\}$ & $b(T_0)=1$\\
\hline\noalign{\smallskip}
$T_{i_1}=0_{i_1-2}0_{i_1-1}1_{i_1}0_{i_1+1}0_{i_1+2} $ & $I(T_{i_1})=\{i_1-1,i_1,i_1+1\}$ & $b(T_{i_1})=0$\\
 $i_1\in \{0,\ldots ,4\}$ &&\\
\hline\noalign{\smallskip}
  $T_{i_2}=0_{i_2-1}1_{i_2} 0_{i_2+1}1_{i_2+2}0_{i_2+3}$ & $  I(T_{i_2})=\{i_2-1,i_2,i_2+2,i_2+3\}$ &$ b(T_{i_2})=0$ \\
   $i_2\in \{0,\ldots ,4\}$ &&\\ 
\hline\noalign{\smallskip}
\hline
\end{tabular}
\end{center}
\caption{$NC_{00,0}(G)$ game.} 
\label{T:C000}
\end{table}

   

For $v_1=1$, $v_0= \frac 2 3$,  and the probability distribution  $w(T_a)=\frac 3 {13}$, $w(T_{i_1})=w(T_{i_2})= \frac 1 {13}$
we get a CSW of 0.72  versus a QSW of $0.83$ 

Note that each player gets types $0$ and $1$ with different probabilities. In fact, it is simple to show that no choice of $w_1, w_2$ and $w_3$ can make 
these probabilities equal. However, it is possible to modify the set of types so that equality becomes possible, as shown in the following.

\subsection{Comments on \texorpdfstring{$NC_{00,01,0}(C_5)$}{Lg}}


We increase the set of types using other questions from the stabiliser: 
We define a game $NC_{00,01,0}(C_5)$  for which with a suitable choice 
of probability distribution the players get  0 and 1 with the same probability. 

\begin{table}[!ht]
\begin{center}
\begin{tabular}{ccc} \hline\noalign{\smallskip}
Type&Involved set& Binary answer {\smallskip} \\
\hline\hline\noalign{\smallskip}
$T_a=11111$ & $I(T_a)=\{0,1 ,2,3,4\}$ & $b(T_0)=1$\\
\hline\noalign{\smallskip}
$T_{i_1}=0_{i_1-2}0_{i_1-1}1_{i_1}0_{i_1+1}0_{i_1+2} $ & $I(T_{i_1})=\{i_1-1,i_1,i_1+1\}$ & $b(T_{i_1})=0$\\
 $i_1\in \{0,\ldots ,4\} $&& \\
 \hline\noalign{\smallskip}
 $T_{i_2}=0_{i_2-1}1_{i_2} 0_{i_2+1}1_{i_2+2}0_{i_2+3} $& $I(T_{i_2})=\{i_2-1,i_2,i_2+1\} $&$  b(T_{i_2})=0$ \\
 $i_2\in \{0,\ldots ,4\} $&&\\
\hline\noalign{\smallskip}
  $T_{i_3}=0_{i_3-1}1_{i_3} 0_{i_3+1}1_{i_3+2}0_{i_3+3}$&$  I(T_{i_3})=\{i_2-1,i_2,i_2+2,i_2+3\}$ &$ b(T_{i_3})=0$ \\
   $i_3\in \{0,\ldots ,4\} $&&\\ 
\hline\noalign{\smallskip}
\hline
\end{tabular}
\end{center}
\caption{$NC_{00,01,0}(G)$ game.} 
\label{T:C00010}
\end{table}

We consider this game with type probability distributions given by $w(T_a)=3/13$, 
$w(t_{i_1})=1/26$,   $w(T_{i_2})=1/26$ and $w(T_{i_3})=1/13$.

The involvement probabilities satisfy $P_{inv}(1)>P_{inv}(0)=8/13$ and  the best 
classical Social wellfare with $v_0=2/3$, $v_1=1$ is $CSW= 0.72$  versus a
 QSW of $0.83$ 


Note that even though the types $T_{i_2}$ and $T_{i_3}$ are similar,
the involved sets and thus the utilities are different. However, if one wants to 
restrict to scenarios in which the utility can be  deterministically determined
from the type, one can just add an extra player with a type allowing to distinguish 
the different cases and with utility the average utility of the other players 
independently of his/her action. 

\section{Quantum vs correlation separation}\label{sec:amp}

In   \cite{belief1} it is asked as an open question whether the separation 
between classical and quantum social welfare is bounded.
We show in this section how two families of amplification techniques can 
increase the separation by  adding a penalty for wrong answers and then by
increasing the number of players. 

\subsection{Wrong answer penalty}
A possible technique is to penalize  bad answers more, using the fact that classical 
functions always produce a bad answer for some question.
Instead of getting 0 when losing we generalize so that   each  player gets
$-N_g v_1$ if they answer 1 and $-N_g v_0$ if they answer 0, where $N_g$ can 
be seen as the penalty for giving a wrong answer.
If $\delta_{(s,t),{\cal L}}=1$ if $(s,t)\in {\cal L}$ and 0 otherwise, and
$N_g$ is a positive number, then 
 $$u_j(s|t)= (-N_g)^{\delta_{(s,t),{\cal L}}}    v_{s_j}$$ 

For  $NC_{01}(C_5)$ as soon as $N_g>3 v_1$ there exists only two classical Nash
equilibria:

\begin{itemize}
\item  All 0 with a  social welfare of
 $\frac{ - N_g  v_0 +
5 v_0}{6}$ and 
\item All NOT  with a social welfare of $\frac{-N_g  v_0 + 2 v_0 + 3 v_1}{6}$.
\end{itemize}


Therefore  the classical social welfare decreases linearly with the penalty while the quantum average social welfare remains $\frac{ v_1+ v_0} {2}$.


\subsection{Distributed parallel repetition}

The distributed parallel  composition of nonlocal games  appears in \cite{holmg} 
for the study of non signaling correlations and also   in \cite{Viddick}
where it is called $k$-fold repetition. 
$k$ groups of players play at the same time and they win collectively if 
all the groups win their game.

\begin{theorem} There exists games with bounded personal utilities
$v_0$, $v_1$ on $O(log(\frac 1 \epsilon))$ players
 ensuring  $\frac{CSW(G)}{QSW(G}\le \epsilon$ for the ratio   best 
 classical social welfare over quantum social welfare with guaranteed value.
\end{theorem}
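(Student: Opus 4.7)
\begin{proof*}{Proof sketch.}
The plan is to instantiate the distributed parallel repetition construction introduced above on a fixed base game whose pseudo-telepathy gap we already control, and then choose the number of repetitions logarithmically in $1/\epsilon$. Concretely, I take $G_0$ to be one of the five-player games studied in Section~\ref{sec:variant} (say $NC_{01}(C_5)$) with utilities $v_0,v_1$ fixed so that (i) the quantum strategy from Theorem~\ref{thm1} is a belief-invariant Nash equilibrium with $QSW(G_0)=(v_0+v_1)/2$ and with guaranteed value, and (ii) the classical winning probability $c:=\max_{\sigma}\Pr_{t\sim w}[\sigma\text{ wins on }t]$ satisfies $c<1$ because $MCG(C_5)$ is a pseudo-telepathy game. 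Then define $G=G_0^{\otimes k}$: $5k$ players partitioned into $k$ disjoint groups of $5$, each group receiving an independent sample of the type distribution $w$ and playing its own copy of $G_0$, and with the global winning condition being the conjunction of the $k$ group winning conditions. The per-player payoff is $v_{s_j}$ if all $k$ copies win and $0$ otherwise.

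On the quantum side, each group shares its own copy of the graph state $\ket{C_5}$ and plays the perfect pseudo-telepathy strategy. Every group wins with probability one, so the conjunction also wins with probability one, and Theorem~\ref{thm1} applies unchanged to give $QSW(G)=(v_0+v_1)/2$, independent of $k$, with guaranteed value preserved. The Nash condition for the repeated game is no weaker than for a single copy: a unilateral deviation in one group can only hurt in that group, while the payoff of $v_{s_j}$ is conditioned on all groups winning, so deviation costs strictly more than in $G_0$, and Theorem~\ref{thm2} transfers.

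On the classical side, the key observation is that because the $k$ groups are composed of disjoint players, any deterministic joint strategy factorises as a product of per-group deterministic strategies, and the types across groups are independent. Hence the global winning probability under any deterministic strategy equals the product of the per-group winning probabilities, which is at most $c^k$. A classical advice distribution is a convex combination of deterministic strategies, so by linearity the winning probability is still at most $c^k$. Since the per-player payoff is at most $v_1$ and vanishes when the game is lost, one gets the uniform bound $CSW(G)\le c^k v_1$.

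Combining, $CSW(G)/QSW(G)\le 2c^k v_1/(v_0+v_1)$, which is $\le\epsilon$ as soon as $k\ge\log\bigl(2v_1/(\epsilon(v_0+v_1))\bigr)/\log(1/c)=O(\log(1/\epsilon))$, and the number of players is $5k=O(\log(1/\epsilon))$ with the utilities $v_0,v_1$ held fixed. The main obstacle, and what I would be most careful about, is the factorisation argument on the classical side: it needs the observation that ``classical correlation'' here means shared randomness (not signalling) and that each player appears in exactly one group, which is what kills any cross-group boost and avoids having to invoke a real parallel-repetition theorem. The rest is a routine calculation.
\end{proof*}
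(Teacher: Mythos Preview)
Your proof is correct and follows essentially the same approach as the paper: take the $k$-fold distributed parallel repetition of a $C_5$ base game, observe that QSW is preserved because the quantum strategy wins every copy with certainty, and bound the classical side by the product of per-group winning probabilities (the paper states this as $u^p(S_i\times S_j)\le p_{win}(S_i)u^p(S_j)$ and instantiates with $CSW(k\text{-fold }NC_{00}(C_5))=(5/6)^k\,CSW(NC_{00}(C_5))$). Your write-up is in fact more explicit than the paper's on two points the paper leaves implicit---the factorisation of classical strategies over disjoint groups with independent types, and the preservation of the Nash condition under repetition---but the underlying argument is the same.
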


\begin{proof}
It is easy to bound the utility in these settings as  for any strategy in a
repeated game. If a player $p$ is involved in the strategy $S_j$ 
but is not involved in the strategy $S_i$ of another group then his utility is 
conditioned by the fact that the $S_i$ strategy wins to receive a positive utility
and 

$$u^p(S_i\times S_j)\le p_{win}(S_i) u^p(S_j)$$

As the quantum strategy obtained from following the nonlocal advice always
wins, the QSW remains unchanged whereas the CSW decreases. 
For instance  $CSW(k-$fold$\  NC_{00}(C_5)) =\frac {5} {6}^k CSW(NC_{00}(C_5))$.

Therefore using these games one can build  games with bounded personal 
utilities $v_0$, $v_1$ on $O(log(\frac 1 \epsilon))$ players
 ensuring $\frac{CSW(G)}{QSW(G}\le \epsilon$
\end{proof}

 \section{Conclusion}
 We have used properties of multipartite graph games to define conflict of interest games, and shown
 that by combining such games  the ratio classical social welfare / quantum social welfare can go to zero.

  One can easily extend to stabilizer games \cite{Viddick} to have any number
  of types and possible strategies.
  
  As pointed out by \cite{belief1}, quantum advice equilibria can be reached
without needing a trusted mediator, furthermore they 
ensure privacy as they are belief invariant.
Some other features may be emphasized  if we define Nash equilibria using
pseudotelepathy games:  such situations ensure a guaranteed utility  and
they are also better when analysing the maximal minimal utility.
  It may be interesting to investigate further how this guaranteed value property
  for some quantum equilibria can be used.
On the other hand it would also be interesting to investigate how relaxing the guaranteed win requirement might allow to increase the QSW even further.

  The possibility of potentially unlimited
  improvement of social welfare while preserving belief invariance is therefore a 
  strong motivation to consider  classical payoff tables that arise for usual
  situations in which Nash equilibria occur and play an important role. For  
   example, in routing problems an  advice provider could  use
  a quantum advice system as follows. Send either    one rotated quantum bit to
  each player asking for the advice and each player measures his qubit to get the 
  answer, or  (in a trusted setting) compute the advice to give using a quantum measurement and send a classical message.


\section*{Acknowledgement}
This research was supported through the program ``Research in Pairs'' 
by the
Mathematisches Forschungsinstitut Oberwolfach in 2019.
The authors also acknowledge the
\emph{``Investissements d'avenir''} (ANR-15-IDEX-02) program of
the French National Research Agency, NCN grant Sonata UMO-2014/14/E/ST2/00020 and thank Sidney Sussex College, Cambridge for support.

  \newpage
\section*{Appendix}
\label{Appendix}
  
  {\bf  Pareto} equilibria  for $NC_{00}$ when $v_0/v_1\le 1/3$, Nb solutions : 121,  Nb distinct  equilibria : 18

$  
  \begin{array}{lllll|lllll|l} \multicolumn{5}{c}{\text{Local functions}} &  \multicolumn{5}{c}{\text{Players utility $[\times 6]$}}& SW [\times30] \\ \hline \\
2
&
1
&
1
&
0
&
0
&
3
v_0
\, + \, 
2
v_1
&
5
v_1
&
5
v_1
&
5
v_0
&
5
v_0
&
13v_0 +12v_1
\\
3
&
3
&
2
&
0
&
0
&
v_0
\, + \, 
2
v_1
&
v_0
\, + \, 
2
v_1
&
v_0
\, + \, 
2
v_1
&
3
v_0
&
3
v_0
&
9v_0 +6v_1
\\
3
&
2
&
1
&
1
&
0
&
2
v_0
\, + \, 
3
v_1
&
4
v_0
\, + \, 
v_1
&
5
v_1
&
5
v_1
&
5
v_0
&
11v_0 +14v_1
\\
1
&
3
&
1
&
1
&
0
&
5
v_1
&
2
v_0
\, + \, 
3
v_1
&
5
v_1
&
5
v_1
&
5
v_0
&
7v_0 +18v_1
\\
1
&
3
&
3
&
1
&
0
&
5
v_1
&
v_0
\, + \, 
4
v_1
&
v_0
\, + \, 
4
v_1
&
5
v_1
&
5
v_0
&
7v_0 +18v_1
\\
2
&
3
&
1
&
2
&
0
&
3
v_0
\, + \, 
2
v_1
&
v_0
\, + \, 
4
v_1
&
5
v_1
&
3
v_0
\, + \, 
2
v_1
&
5
v_0
&
12v_0 +13v_1
\\
3
&
2
&
2
&
3
&
0
&
v_0
\, + \, 
4
v_1
&
4
v_0
\, + \, 
v_1
&
4
v_0
\, + \, 
v_1
&
v_0
\, + \, 
4
v_1
&
5
v_0
&
15v_0 +10v_1
\\
2
&
1
&
1
&
1
&
1
&
2
v_0
\, + \, 
v_1
&
3
v_1
&
3
v_1
&
3
v_1
&
3
v_1
&
2v_0 +13v_1
\\
2
&
2
&
1
&
1
&
1
&
3
v_0
\, + \, 
2
v_1
&
3
v_0
\, + \, 
2
v_1
&
5
v_1
&
5
v_1
&
5
v_1
&
6v_0 +19v_1
\\
3
&
3
&
1
&
1
&
1
&
2
v_0
\, + \, 
v_1
&
2
v_0
\, + \, 
v_1
&
3
v_1
&
3
v_1
&
3
v_1
&
4v_0 +11v_1
\\
3
&
1
&
3
&
1
&
1
&
2
v_0
\, + \, 
v_1
&
3
v_1
&
2
v_0
\, + \, 
v_1
&
3
v_1
&
3
v_1
&
4v_0 +11v_1
\\
3
&
3
&
1
&
2
&
1
&
2
v_0
\, + \, 
3
v_1
&
2
v_0
\, + \, 
3
v_1
&
5
v_1
&
4
v_0
\, + \, 
v_1
&
5
v_1
&
8v_0 +17v_1
\\
3
&
3
&
2
&
2
&
1
&
2
v_0
\, + \, 
3
v_1
&
v_0
\, + \, 
4
v_1
&
3
v_0
\, + \, 
2
v_1
&
3
v_0
\, + \, 
2
v_1
&
5
v_1
&
9v_0 +16v_1
\\
3
&
2
&
3
&
2
&
1
&
v_0
\, + \, 
2
v_1
&
2
v_0
\, + \, 
v_1
&
2
v_0
\, + \, 
v_1
&
2
v_0
\, + \, 
v_1
&
3
v_1
&
7v_0 +8v_1
\\
3
&
2
&
2
&
3
&
1
&
v_0
\, + \, 
2
v_1
&
v_0
\, + \, 
2
v_1
&
v_0
\, + \, 
2
v_1
&
v_0
\, + \, 
2
v_1
&
3
v_1
&
4v_0 +11v_1
\\
3
&
3
&
3
&
3
&
1
&
2
v_0
\, + \, 
3
v_1
&
2
v_0
\, + \, 
3
v_1
&
2
v_0
\, + \, 
3
v_1
&
2
v_0
\, + \, 
3
v_1
&
5
v_1
&
8v_0 +17v_1
\\
3
&
2
&
3
&
2
&
2
&
2
v_0
\, + \, 
3
v_1
&
4
v_0
\, + \, 
v_1
&
2
v_0
\, + \, 
3
v_1
&
3
v_0
\, + \, 
2
v_1
&
3
v_0
\, + \, 
2
v_1
&
14v_0 +11v_1
\\
3
&
3
&
3
&
3
&
3
&
v_0
\, + \, 
4
v_1
&
v_0
\, + \, 
4
v_1
&
v_0
\, + \, 
4
v_1
&
v_0
\, + \, 
4
v_1
&
v_0
\, + \, 
4
v_1
&
5v_0 +20v_1
\\
\end{array}
  $
  
\vspace{0.4cm}
\noindent {\bf Pareto} equilibria for $NC_{00}$ when $1/3\le v_0/v_1\le 1/2$ Nb solutions : 91, Nb distinct  equilibria : 14

$\begin{array}{lllll|lllll|l} \multicolumn{5}{c}{\text{Local functions}} &  \multicolumn{5}{c}{\text{Players utility $[\times 6]$}}& SW [\times30] \\ \hline \\
2
&
1
&
1
&
0
&
0
&
3
v_0
\, + \, 
2
v_1
&
5
v_1
&
5
v_1
&
5
v_0
&
5
v_0
&
13v_0 +12v_1
\\
3
&
2
&
1
&
1
&
0
&
2
v_0
\, + \, 
3
v_1
&
4
v_0
\, + \, 
v_1
&
5
v_1
&
5
v_1
&
5
v_0
&
11v_0 +14v_1
\\
1
&
3
&
1
&
1
&
0
&
5
v_1
&
2
v_0
\, + \, 
3
v_1
&
5
v_1
&
5
v_1
&
5
v_0
&
7v_0 +18v_1
\\
1
&
3
&
3
&
1
&
0
&
5
v_1
&
v_0
\, + \, 
4
v_1
&
v_0
\, + \, 
4
v_1
&
5
v_1
&
5
v_0
&
7v_0 +18v_1
\\
2
&
3
&
1
&
2
&
0
&
3
v_0
\, + \, 
2
v_1
&
v_0
\, + \, 
4
v_1
&
5
v_1
&
3
v_0
\, + \, 
2
v_1
&
5
v_0
&
12v_0 +13v_1
\\
3
&
2
&
2
&
3
&
0
&
v_0
\, + \, 
4
v_1
&
4
v_0
\, + \, 
v_1
&
4
v_0
\, + \, 
v_1
&
v_0
\, + \, 
4
v_1
&
5
v_0
&
15v_0 +10v_1
\\
2
&
2
&
1
&
1
&
1
&
3
v_0
\, + \, 
2
v_1
&
3
v_0
\, + \, 
2
v_1
&
5
v_1
&
5
v_1
&
5
v_1
&
6v_0 +19v_1
\\
3
&
3
&
1
&
1
&
1
&
2
v_0
\, + \, 
v_1
&
2
v_0
\, + \, 
v_1
&
3
v_1
&
3
v_1
&
3
v_1
&
4v_0 +11v_1
\\
3
&
3
&
1
&
2
&
1
&
2
v_0
\, + \, 
3
v_1
&
2
v_0
\, + \, 
3
v_1
&
5
v_1
&
4
v_0
\, + \, 
v_1
&
5
v_1
&
8v_0 +17v_1
\\
3
&
3
&
2
&
2
&
1
&
2
v_0
\, + \, 
3
v_1
&
v_0
\, + \, 
4
v_1
&
3
v_0
\, + \, 
2
v_1
&
3
v_0
\, + \, 
2
v_1
&
5
v_1
&
9v_0 +16v_1
\\
3
&
2
&
2
&
3
&
1
&
v_0
\, + \, 
2
v_1
&
v_0
\, + \, 
2
v_1
&
v_0
\, + \, 
2
v_1
&
v_0
\, + \, 
2
v_1
&
3
v_1
&
4v_0 +11v_1
\\
3
&
3
&
3
&
3
&
1
&
2
v_0
\, + \, 
3
v_1
&
2
v_0
\, + \, 
3
v_1
&
2
v_0
\, + \, 
3
v_1
&
2
v_0
\, + \, 
3
v_1
&
5
v_1
&
8v_0 +17v_1
\\
3
&
2
&
3
&
2
&
2
&
2
v_0
\, + \, 
3
v_1
&
4
v_0
\, + \, 
v_1
&
2
v_0
\, + \, 
3
v_1
&
3
v_0
\, + \, 
2
v_1
&
3
v_0
\, + \, 
2
v_1
&
14v_0 +11v_1
\\
3
&
3
&
3
&
3
&
3
&
v_0
\, + \, 
4
v_1
&
v_0
\, + \, 
4
v_1
&
v_0
\, + \, 
4
v_1
&
v_0
\, + \, 
4
v_1
&
v_0
\, + \, 
4
v_1
&
5v_0 +20v_1
\\
\end{array}$

\vspace{0.4cm}
\noindent {\bf Pareto} equilibria when $\ge 1/2$, NB solutions 81,  Nb distinct  equilibria : 12

$\begin{array}{lllll|lllll|l} \multicolumn{5}{c}{\text{Local functions}} &  \multicolumn{5}{c}{\text{Players utility $[\times 6]$}}& SW [\times30] \\ \hline \\
2
&
1
&
1
&
0
&
0
&
3
v_0
\, + \, 
2
v_1
&
5
v_1
&
5
v_1
&
5
v_0
&
5
v_0
&
13v_0 +12v_1
\\
3
&
2
&
1
&
1
&
0
&
2
v_0
\, + \, 
3
v_1
&
4
v_0
\, + \, 
v_1
&
5
v_1
&
5
v_1
&
5
v_0
&
11v_0 +14v_1
\\
1
&
3
&
1
&
1
&
0
&
5
v_1
&
2
v_0
\, + \, 
3
v_1
&
5
v_1
&
5
v_1
&
5
v_0
&
7v_0 +18v_1
\\
1
&
3
&
3
&
1
&
0
&
5
v_1
&
v_0
\, + \, 
4
v_1
&
v_0
\, + \, 
4
v_1
&
5
v_1
&
5
v_0
&
7v_0 +18v_1
\\
2
&
3
&
1
&
2
&
0
&
3
v_0
\, + \, 
2
v_1
&
v_0
\, + \, 
4
v_1
&
5
v_1
&
3
v_0
\, + \, 
2
v_1
&
5
v_0
&
12v_0 +13v_1
\\
3
&
2
&
2
&
3
&
0
&
v_0
\, + \, 
4
v_1
&
4
v_0
\, + \, 
v_1
&
4
v_0
\, + \, 
v_1
&
v_0
\, + \, 
4
v_1
&
5
v_0
&
15v_0 +10v_1
\\
2
&
2
&
1
&
1
&
1
&
3
v_0
\, + \, 
2
v_1
&
3
v_0
\, + \, 
2
v_1
&
5
v_1
&
5
v_1
&
5
v_1
&
6v_0 +19v_1
\\
3
&
3
&
1
&
2
&
1
&
2
v_0
\, + \, 
3
v_1
&
2
v_0
\, + \, 
3
v_1
&
5
v_1
&
4
v_0
\, + \, 
v_1
&
5
v_1
&
8v_0 +17v_1
\\
3
&
3
&
2
&
2
&
1
&
2
v_0
\, + \, 
3
v_1
&
v_0
\, + \, 
4
v_1
&
3
v_0
\, + \, 
2
v_1
&
3
v_0
\, + \, 
2
v_1
&
5
v_1
&
9v_0 +16v_1
\\
3
&
3
&
3
&
3
&
1
&
2
v_0
\, + \, 
3
v_1
&
2
v_0
\, + \, 
3
v_1
&
2
v_0
\, + \, 
3
v_1
&
2
v_0
\, + \, 
3
v_1
&
5
v_1
&
8v_0 +17v_1
\\
3
&
2
&
3
&
2
&
2
&
2
v_0
\, + \, 
3
v_1
&
4
v_0
\, + \, 
v_1
&
2
v_0
\, + \, 
3
v_1
&
3
v_0
\, + \, 
2
v_1
&
3
v_0
\, + \, 
2
v_1
&
14v_0 +11v_1
\\
3
&
3
&
3
&
3
&
3
&
v_0
\, + \, 
4
v_1
&
v_0
\, + \, 
4
v_1
&
v_0
\, + \, 
4
v_1
&
v_0
\, + \, 
4
v_1
&
v_0
\, + \, 
4
v_1
&
5v_0 +20v_1
\\
\end{array}$
\newpage

{\bf Nash equilibria for  $NC_{01}$}

Nash equilibria for $NC_{01}$ when $1/3\le v_0/v_1\le  1/2$, Nb solutions : 76, Nb distinct  equilibria : 13

$\begin{array}{lllll|lllll|l} \multicolumn{5}{c}{\text{Local functions}} &  \multicolumn{5}{c}{\text{Players utility $[\times 6]$}}& SW [\times30] \\ \hline \\
1
&
1
&
2
&
0
&
0
&
5
v_1
&
5
v_1
&
2
v_0
\, + \, 
3
v_1
&
5
v_0
&
5
v_0
&
12v_0 +13v_1
\\
3
&
2
&
1
&
1
&
0
&
3
v_0
\, + \, 
2
v_1
&
3
v_0
\, + \, 
2
v_1
&
5
v_1
&
5
v_1
&
5
v_0
&
11v_0 +14v_1
\\
1
&
3
&
1
&
1
&
0
&
5
v_1
&
3
v_0
\, + \, 
2
v_1
&
5
v_1
&
5
v_1
&
5
v_0
&
8v_0 +17v_1
\\
3
&
2
&
2
&
1
&
0
&
2
v_0
\, + \, 
3
v_1
&
2
v_0
\, + \, 
3
v_1
&
2
v_0
\, + \, 
3
v_1
&
5
v_1
&
5
v_0
&
11v_0 +14v_1
\\
1
&
3
&
3
&
1
&
0
&
5
v_1
&
2
v_0
\, + \, 
3
v_1
&
2
v_0
\, + \, 
3
v_1
&
5
v_1
&
5
v_0
&
9v_0 +16v_1
\\
1
&
3
&
1
&
2
&
0
&
5
v_1
&
2
v_0
\, + \, 
3
v_1
&
5
v_1
&
2
v_0
\, + \, 
3
v_1
&
5
v_0
&
9v_0 +16v_1
\\
2
&
1
&
3
&
2
&
0
&
2
v_0
\, + \, 
3
v_1
&
5
v_1
&
2
v_0
\, + \, 
3
v_1
&
2
v_0
\, + \, 
3
v_1
&
5
v_0
&
11v_0 +14v_1
\\
2
&
2
&
1
&
1
&
1
&
3
v_0
\, + \, 
2
v_1
&
2
v_0
\, + \, 
3
v_1
&
5
v_1
&
5
v_1
&
5
v_1
&
5v_0 +20v_1
\\
3
&
3
&
1
&
2
&
1
&
2
v_0
\, + \, 
3
v_1
&
3
v_0
\, + \, 
2
v_1
&
5
v_1
&
3
v_0
\, + \, 
2
v_1
&
5
v_1
&
8v_0 +17v_1
\\
3
&
3
&
2
&
2
&
1
&
3
v_0
\, + \, 
2
v_1
&
2
v_0
\, + \, 
3
v_1
&
2
v_0
\, + \, 
3
v_1
&
3
v_0
\, + \, 
2
v_1
&
5
v_1
&
10v_0 +15v_1
\\
3
&
3
&
3
&
3
&
1
&
3
v_0
\, + \, 
2
v_1
&
2
v_0
\, + \, 
3
v_1
&
3
v_0
\, + \, 
2
v_1
&
3
v_0
\, + \, 
2
v_1
&
5
v_1
&
11v_0 +14v_1
\\
3
&
2
&
3
&
2
&
2
&
3
v_0
\, + \, 
2
v_1
&
3
v_0
\, + \, 
2
v_1
&
3
v_0
\, + \, 
2
v_1
&
3
v_0
\, + \, 
2
v_1
&
2
v_0
\, + \, 
3
v_1
&
14v_0 +11v_1
\\
3
&
3
&
3
&
3
&
3
&
2
v_0
\, + \, 
3
v_1
&
2
v_0
\, + \, 
3
v_1
&
2
v_0
\, + \, 
3
v_1
&
2
v_0
\, + \, 
3
v_1
&
2
v_0
\, + \, 
3
v_1
&
10v_0 +15v_1
\\
\end{array}$

\vspace{0.4cm}
Nash equilibria for $NC_{01}$ when $v_0/v_1\ge 1/2$,  Nb solutions 40,  Nb distinct  equilibirums :6

$\begin{array}{lllll|lllll|l} \multicolumn{5}{c}{\text{Local functions}} &  \multicolumn{5}{c}{\text{Players utility $[\times 6]$}}& SW [\times30]  \\ \hline \\
3
&
2
&
1
&
1
&
0
&
3
v_0
\, + \, 
2
v_1
&
3
v_0
\, + \, 
2
v_1
&
5
v_1
&
5
v_1
&
5
v_0
&
11v_0 +14v_1
\\
1
&
3
&
1
&
1
&
0
&
5
v_1
&
3
v_0
\, + \, 
2
v_1
&
5
v_1
&
5
v_1
&
5
v_0
&
8v_0 +17v_1
\\
2
&
2
&
1
&
1
&
1
&
3
v_0
\, + \, 
2
v_1
&
2
v_0
\, + \, 
3
v_1
&
5
v_1
&
5
v_1
&
5
v_1
&
5v_0 +20v_1
\\
3
&
3
&
1
&
2
&
1
&
2
v_0
\, + \, 
3
v_1
&
3
v_0
\, + \, 
2
v_1
&
5
v_1
&
3
v_0
\, + \, 
2
v_1
&
5
v_1
&
8v_0 +17v_1
\\
3
&
3
&
3
&
3
&
1
&
3
v_0
\, + \, 
2
v_1
&
2
v_0
\, + \, 
3
v_1
&
3
v_0
\, + \, 
2
v_1
&
3
v_0
\, + \, 
2
v_1
&
5
v_1
&
11v_0 +14v_1
\\

3
&
2
&
3
&

2
&

2
&
3
v_0
\, + \, 
2
v_1
&
3
v_0
\, + \, 
2
v_1
&
3
v_0
\, + \, 
2
v_1
&
3
v_0
\, + \, 
2
v_1
&
2
v_0
\, + \, 
3
v_1
&
14v_0 +11v_1
\\
\end{array}$

\bibliographystyle{plain}
\bibliography{oquantumsfsimplified}

\end{document}